\documentclass[11pt]{article}

\usepackage[preprint]{acl}

\usepackage{times}
\usepackage{latexsym}

\usepackage[T1]{fontenc}

\usepackage[utf8]{inputenc}

\usepackage{microtype}

\usepackage{inconsolata}

\usepackage{graphicx}

\usepackage{booktabs}

\title{Instructions for *ACL Proceedings}

\author{First Author \\
  Affiliation / Address line 1 \\
  Affiliation / Address line 2 \\
  Affiliation / Address line 3 \\
  \texttt{email@domain} \\}
\usepackage{amsmath,amssymb,amsthm}

\newcommand{\SCH}{\textsc{sch}}
\newcommand{\MAB}{\textsc{mab}}
\newcommand{\CAT}{\textsc{cat}}
\newcommand{\MST}{\textsc{mst}}
\newcommand{\LOFT}{\textsc{loft}}
\newcommand{\IRT}{\textsc{irt}}
\newcommand{\ILP}{\textsc{ilp}}

\newcommand{\TIF}{\ensuremath{\mathrm{TIF}}}
\newcommand{\bPhi}{\boldsymbol{\Phi}}
\newcommand{\bSigma}{\boldsymbol{\Sigma}}
\DeclareMathOperator*{\argmax}{arg\,max}
\theoremstyle{plain}\newtheorem{proposition}{Proposition}

\title{Stochastic Constrained Test Assembly\\
       for AI-Enabled Assessment Systems}

\author{Alina A. von Davier\\
  {Duolingo}{} \\
  \texttt{{avondavier@duolingo.com}{}}}

\begin{document}
\maketitle

\begin{abstract}
Test assembly, the process of constructing a complete test form from an item pool subject to
blueprint constraints, has traditionally been treated as a static optimization
problem. In AI-enabled assessment environments, however, item pools evolve
continuously as newly generated items enter with uncertain psychometric parameters,
and delivery is on demand. These conditions make test assembly a
\emph{sequential decision-making problem under uncertainty}: which form should be
deployed now, given current but incomplete knowledge of item quality, to
simultaneously maximize measurement precision, satisfy content-blueprint
constraints, maintain pool sustainability, and accelerate calibration of uncertain
new items?
This paper proposes the Stochastic Constrained Hybrid (\SCH) framework as a
principled answer to this question. \SCH{} recasts form-level assembly as a multi-armed bandit
(\MAB) problem with Fisher information as the reward, extending recent
item-level approaches in computerized adaptive testing (\CAT) to the form-level
setting. 
A simulation study comparing six test assembly methods is also presented. 
The main contribution of this paper is a framework for incorporating items with uncertain parameters into the automatic test assembly process for linear test forms.

\end{abstract}


\section{Introduction}
\label{sec:intro}

Test assembly, the process of constructing a form from an item pool subject to content
and psychometric constraints, has traditionally relied on three paradigms:
random parallel-form (\textsc{rpf}) assembly \citep{lord1964parallel},
Linearly On-the-Fly Testing (\LOFT) \citep{kingsbury1989,way1998}, and
constrained optimization via shadow testing \citep{vanderLinden2005}.
None was designed for \emph{AI-enabled assessment} environments, where
large language models continuously generate items that enter the operational
pool with uncertain psychometric parameters and no or little response data.

The challenge is direct: items cannot receive reliable difficulty or
discrimination estimates until examinees respond, yet they must be deployed
immediately to refresh the pool.
Traditional solutions---a pilot phase or separate field study---are expensive,
create security risks, and artificially separate calibration from
administration.

We propose the \textbf{Stochastic Constrained Hybrid (\SCH) framework}.
Drawing on \citet{sharpnack2024banditcat}, the core insight is that form
assembly can be recast as a \emph{multi-armed bandit} (\MAB) problem.

A multi-armed bandit is a sequential decision-making framework in which an agent
repeatedly chooses among a set of actions (``arms''), each yielding a stochastic
reward, and must balance \emph{exploitation} (choosing the action currently
believed to be best) against \emph{exploration} (trying less-certain actions to
gather information). The name comes from the image of a row of slot machines
(``one-armed bandits''): pulling each lever gives a random payoff, and the
agent must learn which lever pays best while minimizing losses during learning.
In this paper, each arm is one complete test form---a specific selection of
$n=40$ items satisfying all blueprint constraints---and the reward is the
expected measurement precision of that form.
Thompson sampling~\cite{thompson1933} is a principled algorithm for solving
\MAB{} problems: for each bandit arm it maintains a probability distribution over
the arm's true reward, draws a random sample from each distribution, and
selects the arm whose sample is highest. Arms whose rewards are highly
uncertain receive occasionally large random draws and are therefore explored;
arms with well-understood rewards are selected only when they genuinely
dominate. In the test-assembly context, items with uncertain psychometric
parameters (new, insufficiently calibrated items) naturally receive elevated
Thompson samples (higher probability to be selected) and are therefore routed to examinees for calibration,
without requiring a separate pilot phase.

The \SCH{} framework extends this idea from selecting \emph{one item} per step
in a computerized adaptive test (\CAT) to selecting a complete \emph{$n$-item
form} subject to a test blueprint specifying content-category quotas and
difficulty-band constraints. This is an example of the computational
psychometrics approach~\cite{vonDavier2021computational}, combining traditional
psychometric methods with machine learning algorithms.

The extension of the \MAB{} method from single-item in a computerized adaptive test (\CAT{}, \citep{vanderLinden1998}) as it was developed by \citet{sharpnack2024banditcat} to
$n$-item form assembly is non-trivial for three reasons:
(i)~the \MAB{} arm space
$|\mathcal{C}|{=}\prod_\sigma\binom{|\mathcal{I}_\sigma|}{n_\sigma}\approx10^{68}$
is intractable by enumeration, considering the size of the full form-selection arm space being the number of distinct blueprint-feasible test forms that could be assembled from the pool. In this paper we show how to reduce this number to a manageable number of arms per
blueprint stratum cell, for a specific, but realistic case, making Thompson sampling computationally feasible. 
(ii)~blueprint constraints couple item choices across content categories and
difficulty bands; and
(iii)~post-selection constraint repair creates a learning inconsistency in
the bandit update.
\SCH{} resolves all three.

\paragraph{Contributions.}
(1)~A formal extension of bandit item selection to form-level assembly;
(2)~a stratified cell decomposition (with a pragmatic choice) reducing $10^{68}$ arms to ${\leq}70$
per cell while preserving the information-maximization objective;
(3)~a full delta-method variance approximation over all three \IRT{}
parameters, correcting an $88\%$ underestimate from difficulty-only methods;
(4)~a new information-weighted \LOFT{} baseline revealing a previously
undescribed intermediate operating regime; and
(5)~evidence that pool concentration is a tunable trade-off governed by the
exposure-penalty weight.

\section{Background and Related Work}
\label{sec:background}

\paragraph{IRT and test information.}
Under the 3PL model, the probability that examinee $j$ with ability $\theta_j$
answers item $i$ correctly is
$P_i(\theta){=}c_i{+}(1{-}c_i)/(1{+}e^{-a_i(\theta-b_i)})$,
where $\bPhi_i{=}(a_i,b_i,c_i)$ are discrimination, difficulty, and
pseudo-guessing.
Fisher information is
$I_i(\theta;\bPhi_i){=}a_i^2[P_i{-}c_i]^2[1{-}P_i]/[(1{-}c_i)^2P_i]$.
The \emph{expected test information} of form $F$ is
\begin{equation}
  \mathcal{R}(F)
    = \int\!\sum_{i\in F}I_i(\theta;\bPhi_i)\,\pi(\theta)\,d\theta,
  \label{eq:etif}
\end{equation}
where $\pi(\theta){\sim}\mathcal{N}(0,1)$ (approximated by 200-point
Gaussian quadrature); higher $\mathcal{R}(F)$ means more precise ability
estimates.

\paragraph{Blueprint and stratum cells.}
A blueprint $\mathcal{B}$ specifies form length $n$, per-category counts
$\{n_k\}$ ($\sum_k n_k{=}n$), difficulty-band bounds, a target mean-difficulty
interval $[\underline{b},\bar{b}]$, and item exposure ceiling $\rho_{\max}$.
The $S{=}K{\times}L$ intersections of $K$ content categories and $L$ difficulty
bands are \emph{stratum cells} $\sigma$, each with quota $n_\sigma$ and item
pool $\mathcal{I}_\sigma$.

\paragraph{Related work.}
\citet{vanderLinden1998} formulated Bayesian \CAT{} item selection.
Shadow testing \citep{vanderLinden2005} added blueprint feasibility via an
\ILP{} at each selection step.
\citet{sharpnack2024banditcat} reframed \CAT{} selection as a bandit problem
(BanditCAT), enabling simultaneous precision and calibration.
\citet{sharpnack2025s2a3} added stochastic Sympson--Hetter control (the SAC
framework), validated on the Duolingo English Test.
This paper extends both from single-item \CAT{} to form-level assembly.

\section{Assembly Methods Compared}
\label{sec:methods}

Stratified Random Parallel Form ,\textbf{SR} \citep{lord1964parallel} draws $n_\sigma$ items uniformly from
each cell item pool  $\mathcal{I}_\sigma$.

\LOFT with exposure control, \textbf{LOFT-SH} \citep{kingsbury1989,way1998} draws uniformly subject to a
Sympson--Hetter (S-H) constraint \citep{sympsonHetter1985} keeping $\rho_i{\leq}\rho_{\max}$
via per-item eligibility probabilities $\alpha_i$ updated every $W$ cycles.
\LOFT that considers the expected information function \textbf{LOFT-IW} (new baseline) samples within cells proportional to expected
information $\mathcal{R}(\{i\}){=}\int I_i(\theta;\hat{\bPhi}_i)\pi(\theta)d\theta$,
subject to the same S-H constraint.
Shadow test applied to a linear test form \textbf{ST-Linear} \citep{vanderLinden2005} solves one binary \ILP{} per cycle:
$\max_{\mathbf{x}\in\{0,1\}^N}\mathbf{v}^\top\mathbf{x}$ subject to blueprint
constraints, where $v_i{=}\mathcal{R}(\{i\}){-}\lambda_e\rho_i$ ($\lambda_e{=}0.5$).
\textbf{SCH-MAB} and \textbf{SCH-EXP} are the new methods that use \MAB and they are described in Section~\ref{sec:sch}.

\section{The SCH Framework}
\label{sec:sch}

\subsection{Utility Score and Arm Reduction}

Each item receives a multi-objective utility score:

\begin{align}
  u_i(t) = \underbrace{\lambda_1 \mathbb{E}_{\pi}[I_i(\theta;\hat{\bPhi}_i)]}_{\substack{\text{info}\\
             \text{(precision)}}}
           +\underbrace{\lambda_2\,\mathrm{tr}(\hat{\bSigma}_i)}_{\substack{\text{uncertainty}\\
             \text{(calibration)}}}    
             \notag\\
           +\underbrace{\lambda_3 \exp\!\bigl(-\alpha(t-t_i^{\mathrm{add}})\bigr)}_{\substack{
             \text{refresh}\\\text{(novelty)}}}
           -\underbrace{\lambda_4\rho_i(t)}_{\substack{\text{exposure}\\
             \text{(sustainability)}}},
  \label{eq:utility}
\end{align}

\noindent where $\hat{\bSigma}_i$ is item $i$'s posterior parameter covariance
(calibration uncertainty), $t_i^{\mathrm{add}}$ is the administration cycle in which item $i$
        entered the pool; $\alpha > 0$ is the novelty decay rate (novelty),
and $\rho_i(t)$ its exposure rate (pool health).

\begin{proposition}[Reward decomposition]
\label{prop:decomp}
Under an \IRT{} model, the expected test information of a form decomposes
additively over stratum cells:
$\mathcal{R}(F)=\sum_\sigma\mathcal{R}(F_\sigma)$ where
$F_\sigma{=}F\cap\mathcal{I}_\sigma$.
\end{proposition}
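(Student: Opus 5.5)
The plan is to reduce the claim to two elementary facts: the stratum cells partition the pool, and the integrand in \eqref{eq:etif} is a finite sum of per-item terms. Under an \IRT{} model with local independence, each item's Fisher information $I_i(\theta;\bPhi_i)$ depends only on that item's parameters and on $\theta$. The test information of $F$ is therefore the finite sum $\sum_{i\in F} I_i(\theta;\bPhi_i)$; for the 3PL this is exactly the form already written in \eqref{eq:etif}.

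First, I will check that the cells $\{\mathcal{I}_\sigma\}$ are pairwise disjoint and cover the pool. Each item belongs to exactly one of the $K$ content categories and exactly one of the $L$ difficulty bands, and the bands are taken as half-open intervals so that no item lies in two of them. Hence the sets $F_\sigma=F\cap\mathcal{I}_\sigma$ are disjoint, and $F=\bigcup_\sigma F_\sigma$. This lets me regroup the finite sum:
\[
\sum_{i\in F} I_i(\theta;\bPhi_i)=\sum_\sigma\sum_{i\in F_\sigma} I_i(\theta;\bPhi_i)
\quad\text{for every }\theta.
\]

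Second, I will integrate both sides against $\pi$ and exchange the integral with the finite outer sum over $\sigma$. This requires each term to be integrable. Since $I_i(\theta;\bPhi_i)\ge 0$ and is bounded (for the 3PL, by a constant times $a_i^2$), each $\int I_i\,\pi\,d\theta$ is finite, and linearity of the integral gives the exchange. By definition, $\int\sum_{i\in F_\sigma}I_i\,\pi\,d\theta=\mathcal{R}(F_\sigma)$, which yields $\mathcal{R}(F)=\sum_\sigma \mathcal{R}(F_\sigma)$. The same argument applies verbatim to the Gaussian-quadrature approximation of the integral, because a quadrature rule is also linear in the integrand.

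The main obstacle is not analytic but definitional. One must make explicit that the cells partition the pool, in particular how items on band boundaries are assigned. One must also be explicit about what the decomposition does not cover: the blueprint's mean-difficulty and exposure constraints couple the cells, so only the objective, not the feasible set, separates across cells. I will state this caveat briefly, and note that additivity relies on local independence, since under dependent items the information would no longer be a per-item sum.
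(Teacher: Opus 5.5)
Your proposal is correct and follows the same route as the paper: you regroup $\sum_{i\in F}I_i(\theta)$ cell by cell and then integrate against $\pi(\theta)$ using linearity. The points you make explicit (that the cells $\mathcal{I}_\sigma$ partition the pool, that each $I_i$ is bounded and hence integrable, and that the quadrature approximation is also linear) are left implicit in the paper's one-line argument, and stating them only tightens it.
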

\noindent\textit{Proof.} $\TIF(\theta;F)=\sum_{i\in F}I_i(\theta)=\sum_\sigma\sum_{i\in F_\sigma}I_i(\theta)
=\sum_\sigma\TIF(\theta;F_\sigma)$. Integrating against $\pi(\theta)$ and applying
linearity of integration gives $\mathcal{R}(F)=\sum_\sigma\mathcal{R}(F_\sigma)$. Hence, linearity of integration over $\pi(\theta)$ proves it.\quad\qed

In order to make the problem computationally manageable, test form assembly therefore decomposes into $S$ independent per-cell problems.
The top-$s{=}8$ items (pragmatically chosen here) in each cell by $u_i$ form candidate set
$\mathcal{Q}_\sigma$; the arm set
$\mathcal{A}_\sigma{=}\binom{\mathcal{Q}_\sigma}{n_\sigma}$ has at most
$\binom{8}{4}{=}70$ arms.
Setting $s{=}8$ satisfies $s{\geq}2n_\sigma$ for all $n_\sigma{\in}\{2,3,4\}$
in the $n{=}40$ blueprint ($K{=}5$, $L{=}3$; item allocation $[10,8,8,8,6]$).

\subsection{Delta-Method Variance for Exploration}

For Thompson sampling to explore uncertain items, reward variance must
reflect parameter uncertainty:

\begin{align}
  \widehat{\mathrm{Var}}\bigl(\mathcal{R}(\mathfrak{a})\bigr)
    &\approx
    \sum_{i\in\mathfrak{a}}\!\Bigl[
      \Bigl(\tfrac{\partial\mathcal{R}}{\partial a_i}\Bigr)^{\!2}\!\mathrm{SE}(\hat{a}_i)^2
     +\Bigl(\tfrac{\partial\mathcal{R}}{\partial b_i}\Bigr)^{\!2}\!\mathrm{SE}(\hat{b}_i)^2
     \notag\\
    &\quad\;\;
     +\Bigl(\tfrac{\partial\mathcal{R}}{\partial c_i}\Bigr)^{\!2}\!\mathrm{SE}(\hat{c}_i)^2
    \Bigr],
  \label{eq:deltamethod}
\end{align}

where $\mathfrak{a}$ is a candidate arm (an $n_\sigma$-item subset of
$\mathcal{Q}_\sigma$) and $\mathrm{SE}(\hat{a}_i),\mathrm{SE}(\hat{b}_i),
\mathrm{SE}(\hat{c}_i)$ are standard errors of the item parameter estimates.
A $b$-only approximation captures only $11.6\%$ of full variance for
jump-start items ($\mathrm{SE}(\hat{b}){\approx}0.4$) because
$|\partial\mathcal{R}/\partial a_i|{\gg}|\partial\mathcal{R}/\partial b_i|$:
discrimination uncertainty dominates.
Without the full variance, Thompson sampling collapses to pure exploitation and will avoid the new items (exploration). 

\subsection{SCH-MAB and SCH-EXP}

The \MAB extension to test form assembly, \textbf{SCH-MAB} maintains a Gaussian approximation for each arm
$\mathfrak{a}$: mean $\mu_{\sigma,\mathfrak{a}}{=}\mathcal{R}(\mathfrak{a})|_{\bPhi_i=\hat{\bPhi}_i}$
and variance $\sigma^2_{\sigma,\mathfrak{a}}$ from Eq.~\eqref{eq:deltamethod}.
It draws $\tilde{\mathcal{R}}_{\sigma,\mathfrak{a}}{\sim}
\mathcal{N}(\mu_{\sigma,\mathfrak{a}},\sigma^2_{\sigma,\mathfrak{a}})$
and selects $\mathfrak{a}_\sigma^*{=}\argmax_\mathfrak{a}
\tilde{\mathcal{R}}_{\sigma,\mathfrak{a}}$.
Arms with large $\sigma^2$ (uncertain items) receive extreme draws and are
explored; well-calibrated arms are selected only when their mean dominates.
Exposure is controlled through $\lambda_4$ in Eq.~\eqref{eq:utility}.

\MAB with exposure control  is the extension of \textbf{SCH-EXP} \citep{sharpnack2025s2a3} from an item level \CAT to a linear test that adds Sympson-Hetter (S-H) arm weights:
$\mathfrak{a}_\sigma^*{=}\argmax_\mathfrak{a}\;
w_\mathfrak{a}{\cdot}\max(\tilde{\mathcal{R}}_{\sigma,\mathfrak{a}},0)^\beta$,
where $w_\mathfrak{a}{=}\min_{i\in\mathfrak{a}}\alpha_i$ and $\beta{=}0.5$.
This provides a harder exposure ceiling than $\lambda_4$ alone.

\textbf{Feasibility filter.}
Filtering arms to satisfy the mean-difficulty constraint \emph{before}
Thompson sampling eliminates a systematic algorithm learning bias: without it the bandit
updates on a repaired arm rather than the one it chose.
This improved the constraint satisfaction, CS, from $0.825$ to $0.963$ for SCH-MAB.

\section{Simulation Study}
\label{sec:simulation}

\paragraph{Design.}
$N{=}1{,}000$ items; $n{=}40$; $K{=}5$ content categories; $L{=}3$ difficulty bands;
$T{=}500$ cycles; $J{=}500$ examinees/cycle; $R{=}30$ replications;
$\rho_{\max}{=}0.30$.
Item parameters: $a_i{\sim}\mathrm{LogNormal}(0,0.5)$,
$b_i{\sim}\mathcal{N}(0,1)$, $c_i{\sim}\mathrm{Beta}(4,16)$.
New items enter with jump-start calibration
($\mathrm{SE}(\hat{b}_i){\approx}0.30$--$0.50$).
SCH weights: $(\lambda_1,\lambda_2,\lambda_3,\lambda_4){=}(0.40,0.25,0.15,0.20)$
for SCH-MAB; $(0.55,0.25,0.15,0.05)$ for SCH-EXP;
$s{=}8$ for both; Python~3.12, single CPU.

Five scenarios (Table~\ref{tab:scenarios}) range from static (S1) to
inflow plus drift (S5).
\emph{Inflow}: new jump-start items added each cycle;
\emph{drift}: $b_i{+}{=}\varepsilon_t$, $\varepsilon_t{\sim}\mathcal{N}(0,0.05^2)$.

\begin{table}[t]
\centering
\caption{Simulation scenarios.}
\label{tab:scenarios}
\setlength{\tabcolsep}{3.5pt}\small
\begin{tabular}{@{}lllr@{}}
\toprule
\textbf{ID} & \textbf{Description}  & \textbf{Inflow}   & \textbf{Drift}  \\
\midrule
S1 & Static pool          & None              & None            \\
S2 & Moderate inflow      & 5\%/cycle         & None            \\
S3 & High inflow          & 20\%/cycle        & None            \\
S4 & Drift only           & None              & $\sigma{=}0.05$ \\
S5 & Inflow $+$ drift     & 10\%/cycle        & $\sigma{=}0.05$ \\
\bottomrule
\end{tabular}
\end{table}

\paragraph{Metrics.}
Form comparability (FC$\downarrow$: SD of $\mathcal{R}(F_t)$),
constraint satisfaction (CS$\uparrow$),
average test information (ATI$\uparrow$: mean $\mathcal{R}(F_t)$),
pool sustainability via exposure Gini (IPS$\downarrow$),
new-item calibration (NIC$\uparrow$: mean $\mathrm{SE}(\hat{b})$ reduction),
subgroup fairness via differential test functioning (SF$\downarrow$), and
computational cost (CC$\downarrow$, ms/cycle).

\section{Results}
\label{sec:results}

Table~\ref{tab:main} reports cross-scenario means and
Figure~\ref{fig:tradeoff} visualizes the ATI--IPS trade-off.
Three distinct operating regimes emerge.

\paragraph{Regime 1 (SR, LOFT-SH).}
ATI${=}0.228$, IPS${=}0.177$.
High pool health, modest precision.
SR and LOFT-SH are statistically equivalent on measurement quality;
LOFT-SH's advantage is the provable S-H exposure ceiling.
NIC${=}0.318$ arises from new items entering the pool by chance.

\paragraph{Regime 2 (LOFT-IW).}
ATI${=}0.350$, IPS${=}0.422$.
A previously undescribed intermediate regime accessible with a minimal
change to any LOFT-SH system: replace uniform within-cell sampling
with information-proportional sampling.

\paragraph{Regime 3 (ST-Linear, SCH).}
ATI${\geq}0.64$, IPS${\geq}0.92$.
High precision, substantial pool concentration.
SCH-MAB matches ST-Linear's precision (ATI\,$0.656$ vs.\ $0.642$) while
providing better constraint satisfaction under drift (CS\,$0.963$ vs.\
$0.880$--$0.890$) and active new-item calibration (NIC\,$0.072$ vs.\ $0.039$).

\paragraph{Pool concentration.}
IPS${\approx}0.93$ means ${\approx}120$ items receive all usage,
leaving ${\approx}880$ essentially unused.
Sensitivity analysis shows increasing $\lambda_4$ from $0.15$ to $0.40$
reduces IPS by $33\%$ ($0.880\to0.586$) at an ATI cost of $27\%$
($0.567\to0.411$)---concentration is a tunable information--security
trade-off, not an algorithmic inevitability.
The $\beta$ temperature in SCH-EXP has negligible effect on IPS
($\beta{\in}\{0.2,0.5,1.0\}$: range $0.812$--$0.831$), confirming
that $\lambda_4$, not selection stochasticity, governs concentration.

\begin{table*}[t]
\centering
\caption{Cross-scenario means ($R{=}30$, $T{=}500$).
  Bold = best per metric.
  $^{\dagger}$NIC: inflow scenarios only.
  $^{\ddagger}$Cached production cost ${\approx}4$\,ms.}
\label{tab:main}
\setlength{\tabcolsep}{4.5pt}\small
\begin{tabular}{@{}lrrrrrrr@{}}
\toprule
\textbf{Method}
  & \textbf{FC\,$\downarrow$} & \textbf{CS\,$\uparrow$}
  & \textbf{ATI\,$\uparrow$} & \textbf{IPS\,$\downarrow$}
  & \textbf{NIC}$^{\dagger}$\textbf{\,$\uparrow$}
  & \textbf{SF\,$\downarrow$}
  & \textbf{CC\,(ms)\,$\downarrow$} \\
\midrule
SR         & 0.027 & 0.986 & 0.228 & \textbf{0.176} & 0.318 & \textbf{0.007} & \textbf{1.4} \\
ST-Linear  & \textbf{0.016} & 0.953 & 0.642 & 0.921 & 0.039 & 0.022 & 27.9 \\
LOFT-SH    & 0.027 & \textbf{0.987} & 0.228 & 0.178 & 0.318 & \textbf{0.007} & 1.7 \\
LOFT-IW$^{\ddagger}$
           & 0.032 & \textbf{0.987} & 0.350 & 0.422 & 0.177 & 0.012 & ${\approx}4$ \\
SCH-MAB    & \textbf{0.014} & 0.963 & \textbf{0.656} & 0.930 & \textbf{0.072} & 0.022 & 80.6 \\
SCH-EXP    & 0.062 & 0.980 & 0.643 & 0.926 & 0.053 & 0.022 & 100.5 \\
\bottomrule
\end{tabular}
\end{table*}

\begin{figure}[t]
  \centering
  \includegraphics[width=\columnwidth]{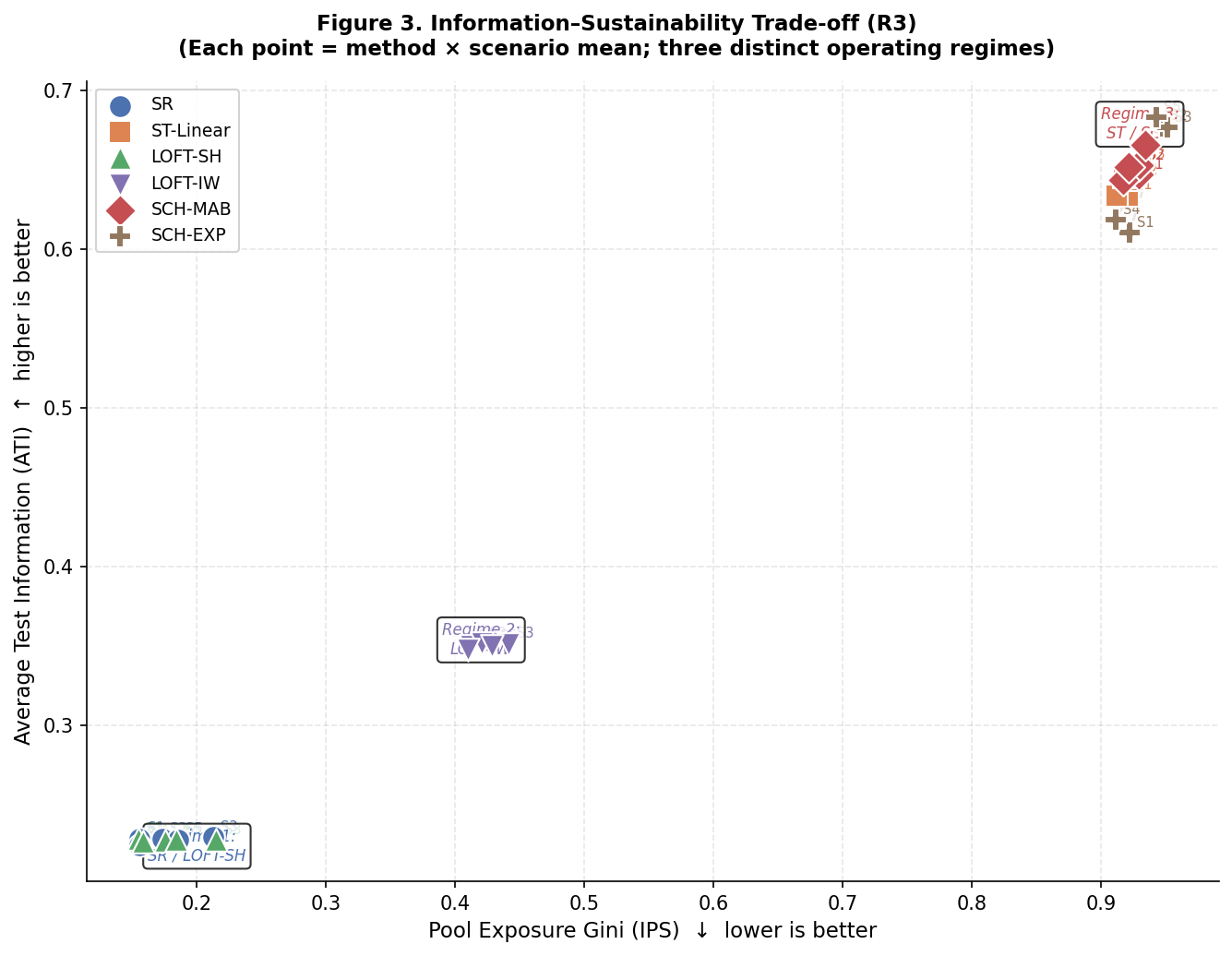}
  \caption{ATI vs.\ pool exposure Gini (IPS) across six methods and five
    scenarios. Three operating regimes are clearly separated.
    LOFT-IW (Regime 2) is a new intermediate regime.}
  \label{fig:tradeoff}
\end{figure}

\section{Discussion and Conclusion}
\label{sec:conclusion}

This paper introduced the Stochastic Constrained Hybrid (\SCH) framework for
test form assembly in AI-enabled assessment systems. The motivating question was
how to simultaneously maximize measurement precision, satisfy blueprint constraints,
maintain pool sustainability, and accelerate calibration of uncertain new
items; four objectives that no existing method addresses jointly. \SCH{} answers
this question by recasting form-level assembly as a multi-armed bandit problem,
exploiting the additivity of the \IRT{} test information function to decompose an
intractable arm problem into small, independent per-cell bandits with
a manageable number of arms each.

Three methodological advances distinguish this framework from earlier item-level
bandit approaches: a full delta-method variance approximation (correcting an $88\%$
underestimate from difficulty-only approximations); a feasibility filter that
ensures the bandit learns from exactly what it assembles; and a theoretical
demonstration---confirmed via simulations---that pool concentration is a tunable
ATI--security trade-off governed by the exposure-penalty weight $\lambda_4$.

A simulation study comparing six methods reveals three distinct operating regimes.
The corrected intermediate LOFT-IW regime (ATI\,$=0.350$, IPS\,$=0.422$) is a new finding
with immediate practical implications: it is accessible with minimal modification
of any existing \LOFT{} implementation and may serve as the appropriate choice for
programs that need more than stratified random sampling but cannot tolerate the
pool concentration of full information maximization. The two \SCH{} variants
occupy the information-maximizing Regime 3 alongside shadow testing, but offer
more stable constraint satisfaction under item-parameter drift and the ability
to calibrate uncertain new items without a separate pilot phase.

\paragraph{Limitations and future work.}
Results are simulation-based with very specific default values chosen for pragmatic reasons; real-pool validation is needed.
IPS${\approx}0.93$ limits SCH in security-sensitive programs without
tuning $\lambda_4$.
Natural extensions include \MST{} module routing
\citep{yan2014multistage,yan2024multistage}, where modules become bandit
arms ($K_g{\leq}5$ per stage), and adaptive $\lambda$ scheduling.

\SCH{} addresses a genuinely novel problem: assembling blueprint-compliant
forms that simultaneously maximize precision, sustain pool health, and
calibrate uncertain new items without a pilot phase.

\section*{Acknowledgments and Declaration on the Use of Generative AI}

This research study was an experiment in using Generative AI as a scientific collaborator, inspired by Google's Co-Scientist as described in \cite{gottweis2025aicoscientist} and other papers such as  \cite{shao2026sciscigpt}. During the preparation of this work, including the exploration of ideas, simulations, drafting, and reference reformatting, I utilized Claude, Sonnet 4.6 (Enterprise). Claude operated with access to Python 3.12 on a single CPU in a Linux (Ubuntu 24) environment. Additionally, DeepSeek–V3 was employed as a reviewer for both the paper and the code. While the primary ideas presented are my own, I interacted extensively with these technologies over several months to iterate, review, and edit all content. I take full responsibility for the final publication. 

I would like to thank Steve Nydick (Duolingo) and J.R. Lockwood (Independent Consultant) for their detailed reviews of the paper. 
Finally, I am grateful for the opportunity to have discussed several research papers on random parallel test forms at the 2026 National Council on Measurement in Education (NCME) in Los Angeles. In particular, the work presented by David Foster and Sergio Araneda (Caveon) provided the initial inspiration for this paper.

\bibliography{references_final}

\end{document}